\newtheorem{theorem}{Theorem}
\newtheorem{lemma}[theorem]{Lemma}
\newtheorem{definition}[theorem]{Definition}
\newtheorem{claim}[theorem]{Claim}
\title{Correlated Jamming in a Joint Source Channel Communication System}
\author{
  \IEEEauthorblockN{Amitalok J. Budkuley and Bikash Kumar Dey }
  \IEEEauthorblockA{
    Indian Institute of Technology Bombay\\
		Mumbai, India\\
    Email: \{amitalok,bikash\}@ee.iitb.ac.in \vspace*{-0.5cm}}
  \and
  \IEEEauthorblockN{Vinod M. Prabhakaran}
  \IEEEauthorblockA{
     Tata Institute of Fundamental Research\\
     Mumbai, India\\
     Email: vinodmp@tifr.res.in \vspace*{-0.5cm}}
}
\begin{document}
\maketitle 
\begin{abstract}
We study correlated jamming in joint source-channel communication systems. An i.i.d. source is to be communicated over a memoryless channel in the presence of a correlated jammer with non-causal knowledge of user transmission. This user-jammer interaction is modeled as a zero sum game. A set of conditions on the source and the channel is provided for the existence of a Nash equilibrium for this game, where the user strategy is uncoded transmission and the jammer strategy is i.i.d jamming. 
This generalizes a well-known example of uncoded communication of a Gaussian sources over Gaussian channels with additive jamming. Another example, of a Binary Symmetric source over a Binary Symmetric channel with jamming, is provided as a validation of this result. 
\end{abstract}
\section{Introduction}\label{sec:introduction}
In this work, we examine the problem of communication in the presence of correlated jamming in Joint Source-Channel Communication (JSCC) systems. Here, an independent and identically distributed (i.i.d.) source is to be communicated over a memoryless channel while a jamming adversary, which can listen to the user communication, attempts to disrupt this communication. In the absence of jamming, we know that the separation scheme~\cite{shannon}, which is the strategy of splitting the coding into two stages of source coding followed by channel coding, results in optimal performance. The separation scheme generally needs the use of long codewords thus, causing large delays. However, Gastpar et al.\ show in~\cite{gastpar} that this is not always necessary, and that there exist certain JSCC systems where the optimal performance guaranteed through the separation principle can also be obtained through a much simpler zero-delay uncoded communication scheme. Such systems are called \textit{matched source-channel} systems in~\cite{gastpar}.

In one of the earliest works which studies jamming in JSCC systems~\cite{basar}, Ba\c{s}ar
considers the problem of communicating a zero-mean Gaussian random variable over an AWGN channel in the presence of a correlated jamming adversary. The jammer is average power constrained and allowed to correlate to the encoder's output i.e. the jammer has access to the encoder's input to the channel. The user-jammer interaction is modeled as a game over the average distortion in the source data. A Nash equilibrium pair of strategies for the user and the jammer is determined and the corresponding equilibrium utility value obtained. 
In our work, we extend this formulation to examine the problem of communicating an i.i.d. source vector over a memoryless channel under jamming. 

Apart from~\cite{basar}, various other works too, have considered the problem of jamming in JSCC systems. Ba\c{s}ar and Wu~\cite{basar-wu} study the problem of communicating a Gaussian random variable over an AWGN channel when the jammer, instead of being correlated to the encoder's output as in~\cite{basar}, is correlated to the encoder's input. The problem of jamming in additive Gaussian channels for arbitrary sources is studied in~\cite{ekyol-cdc2013},~\cite{ekyol-isit2013}. 

In this paper, we consider the problem of communicating a fixed length random vector produced by an i.i.d source over a  memoryless channel in the presence of a jammer. The jammer has non-causal knowledge of the user's encoder output (i.e., encoder's input to the channel) and is allowed to correlate with it. We formulate a non-cooperative zero sum game between the user and jammer, where the user's utility is the average source distortion. The user aims to minimize the average distortion while the jammer aims to maximize it. 
We determine a set of conditions on the source and the channel such that uncoded communication and i.i.d. jamming form a Nash equilibrium.
Extending the terminology introduced in~\cite{gastpar} to describe uncoded systems which are optimal, we call uncoded JSCC systems with i.i.d. jamming as \textit{matched source-jammer-channel} systems. We show that the Gaussian setup considered by Ba\c{s}ar in~\cite{basar} is an example of a matched source-jammer-channel system. We also study the problem of communicating a binary symmetric source over the binary symmetric channel in the presence of a correlated jammer and show it to be another instance of a matched source-jammer-channel system system. We determine its Nash equilibrium utility.

The following is the organization of the paper. In Section~\ref{sec:system:model}, we describe the system model and discuss the problem setup along with the resulting non-cooperative game. Our main result is stated in Section~\ref{sec:main:result} where we also provide its detailed proof through the analysis of the non-cooperative game between the user and the jammer. We discuss and analyse two important examples of JSCC systems with jamming, viz., the Gaussian system and the binary system, in Section~\ref{sec:examples}, the latter not having been considered before to the best of our knowledge. We devote the next two sections toward the discussion of certain important implications of our results and some overall concluding remarks. 
\section{System Model and Problem Statement}\label{sec:system:model}
\subsection{The Communication Setup}
Consider the communication setup given in Fig.~\ref{fig:general:setup}. 
\begin{figure}[!ht]
  \begin{center}
    \includegraphics[trim=0cm 10cm 0cm 0cm, scale=0.3]{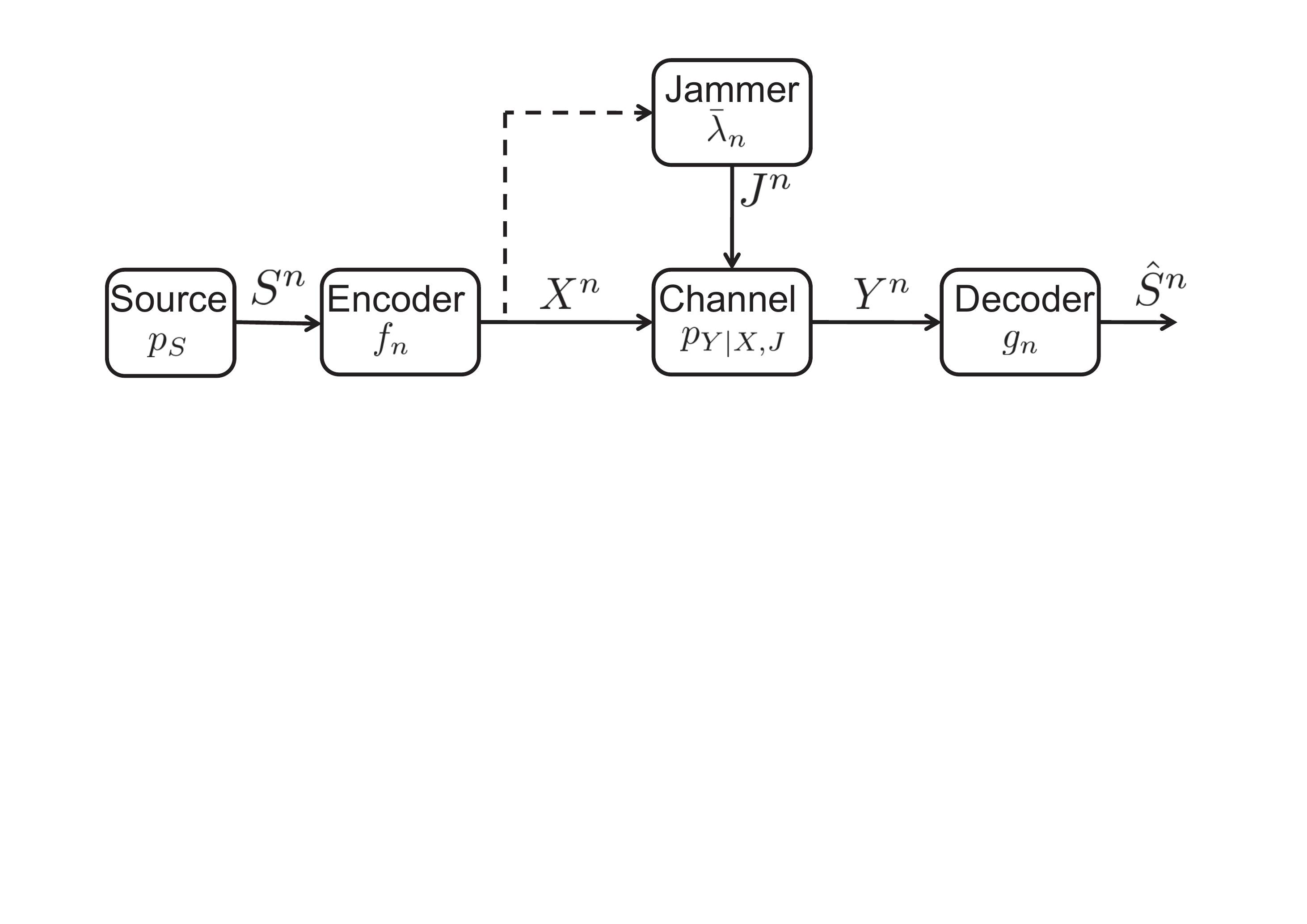}
    \caption{The Problem Setup}
    \label{fig:general:setup}
  \end{center}
\end{figure}
The user intends to transmit data from a Discrete Memoryless Source (DMS) over a Discrete Memoryless Channel (DMC) in the presence of a correlated jammer. The user encodes a size-$n$ block of source data $S^n$ into $X^n$ and transmits it over the channel. The correlated jammer observes this encoded block of data non-causally and inputs a jamming signal $J^n$ into the channel. Owing to the channel noise and the jammer input, the channel generates a noisy output $Y^n$ of the encoded data. The decoder, upon receiving $Y^n$, generates an estimate $\hat{S}^n$ of the original source data according to a given fidelity criterion. We now describe the important elements in this problem, viz., the source, the channel and the code. 

The DMS is specified by a pmf $p_S(s)$, $s \in \mathcal{S}$, where every $S_i$, $i=1,2,\dots,n$ is according to $p_S(.)$. The receiver estimate of the source data is $(\hat{S}_1,\hat{S}_2,\dots,\hat{S}_n)$, where $\hat{S}_i\in \hat{\mathcal{S}}$, for $i=1,2,\dots,n$. Consider an additive distortion criterion and let $d$ denote the per-letter distortion measure for the source data, where $d:\mathcal{S}\times\hat{\mathcal{S}}\rightarrow \mathbb{R}^+$. Here, the average distortion between source date $s^n$ and its estimate $\hat{s}^n$ is given as 
\begin{equation}\label{eq:d:definition}
d(s^n,\hat{s}^n)=\frac{1}{n} \sum_{i=1}^n d(s_i,\hat{s}_i).
\end{equation}
The communication channel is a DMC specified by the conditional pmf $p_{Y|X,J}(y|x,j)$, where $x\in \mathcal{X}$ is the encoder's channel input, $j\in \mathcal{J}$ is the  channel input of the jammer and $y\in \mathcal{Y}$ is the channel output respectively. Since the channel is a DMC, if $x^n$ and $j^n$ are the encoder's and the jammer's channel input respectively and the resulting channel output is $y^n$, then
\begin{equation*}\label{eq:DMC}
p_{Y^n|X^n,J^n}(y^n|x^n,j^n)=\prod_{i=1}^n p_{Y|X,J}(y_i|x_i,j_i).
\end{equation*}
The sender employs a \textit{unit-rate} joint source channel code~\cite{gallager} to communicate the source data over the channel. Such a joint source-channel code, denoted by $(f,g)$, is specified through the encoding function $f$, where $f:\mathcal{S}^n\rightarrow \mathcal{X}^n$, and a decoding function $g$, where $g:\mathcal{Y}^n\rightarrow \hat{\mathcal{S}}^n$. Let the cost of transmitting $x\in\mathcal{X}$ on the channel be $\rho_X(x)$, where $\rho_X:\mathcal{X}\rightarrow\mathbb{R}^+$. The user has an average transmit cost constraint of $P_U$ and hence, $X^n$ is such that $\sum_{i=1}^n \rho_X(X_i) \leq n P_U$. 

The encoder output is observed non-causally by an adversarial jammer and hence, the jammer can correlate its signal with it. The jammer behaviour is represented through its jamming function $\bar{\lambda}$.
Thus, upon observing $X^n$, the jammer chooses its own channel input $J^n=\bar{\lambda}(X^n)$, where $J^n=(J_i)_{i=1}^n=\bar{\lambda}(X^n)=(\lambda_1(X^n),\lambda_2(X^n),\dots,\lambda_n(X^n))$ and $J_i\in\mathcal{J}$, $i=1,2,\dots,n$. Let $\rho_{J|X}(j|x)$ be the jammer cost function, where $\rho_{J|X}: \mathcal{J}\times\mathcal{X}\rightarrow \mathbb{R}$. The jammer's signal $J^n$ is average transmit cost constrained to ${P}_J$, and thus, $\sum_{i=1}^n \mathbb{E}[\mathbb{E}[\rho_{J|X}(J_i,X_i)]] \leq n P_J$. Note here that, in general, the jammer cost function depends on $X$. 

Given these different elements, let us define a JSCC System with a Jammer (JSCCSJ) as follows.
\begin{definition}[JSCCSJ]
A JSCC System with a Jammer (JSCCSJ) is defined as a tuple $(p_S,p_{Y|X,J},\rho_X,\rho_{J|X},d)$, where $p_S$ is the i.i.d. source distribution, $p_{Y|X,J}$ is the memoryless channel distribution, $\rho_X$ is the user cost function, $\rho_{J|X}$ is the jammer cost function and $d$ is the distortion measure.
\end{definition}

Now for this JSCCSJ, upon using the code $(f,g)$ and when the jamming strategy is $\bar{\lambda}$, the resulting average distortion in the original source data $S^n$, given its estimate $\hat{S}^n$, is 
\begin{equation}\label{eq:average:distortion:definition}
D(f,\bar{\lambda},g)=\frac{1}{n} \mathbb{E} \left[ d(S^n,\hat{S}^n)\right].
\end{equation}
Note that we allow the user strategy as well as the jammer strategy to be  either deterministic or probabilistic. 
\subsection{The Problem Statement}
Given the average cost constraint $P_U$, the user seeks to minimize the average distortion in~\eqref{eq:average:distortion:definition} through the choice of an appropriate code $(f,g)\in \mathcal{C}^n({P}_U)$. Here, $\mathcal{C}^n({P}_U)$ is the set of all feasible $n$-length unit-rate joint source channel codes i.e. codes where the user average cost constraint ${P}_U$ is satisfied. On the contrary, the jammer seeks to maximize the average distortion in~\eqref{eq:average:distortion:definition} through the choice of an appropriate jamming strategy $\bar{\lambda}\in \mathcal{L}^n({P}_J)$. Here, $\mathcal{L}^n({P}_J)$ is the set of all feasible $n$-length jammer strategies i.e. jammer strategies which satisfy the jammer cost constraint ${P}_J$. 

Given the opposing interests of the user and the jammer over the average distortion, we formulate and analyse a non-cooperative zero sum game between the user and the jammer. Referring to the underlying system, we call this game the JSCCSJ $(p_S, p_{Y|X,J}, \rho_X, \rho_{J|X}, d)$ game. Here, our interest lies in determining a Nash equilibrium pair $(f^*,\bar{\lambda}^*,g^*)$ of user and jammer strategies for this game such that
\begin{align}\label{eq:optimality}
D(f^*,\bar{\lambda},g^*)\leq D(f^*,\bar{\lambda}^*,g^*)\leq D(f,\bar{\lambda}^*,g)
\end{align}
holds over all $(f,g)\in\mathcal{C}^n({P}_U)$ and $\bar{\lambda}\in\mathcal{L}^n({P}_J)$.  
Also, since all Nash equilibria of a zero sum game possess the same utility~\cite{osborne}, the game has a unique Nash equilibrium utility. The system will be characterized in terms of this equilibrium average distortion $D(f^*,\bar{\lambda}^*,g^*)$. 
\section{The Main Result}\label{sec:main:result}
The main result of our work is the determination of a set of conditions on 
the source and the channel in a JSCCSJ under which a 
single-letter coding strategy and i.i.d. jamming strategy form a Nash
equilibrium of the game over a JSCCSJ. The following theorem gives
the main result. Here $D(\cdot ||\cdot)$ is the Kullback-Leibler
divergence between two distributions.
\begin{theorem}\label{thm:main:result}
Let $p_S$ and $p_{Y|X,J}$ be a source and channel respectively.
For a given $p_{X|S}$, $p_{J|X}$ and $p_{\hat{S}|Y}$, suppose $\rho_X(x)$, $\rho_{J|X}(j|x)$ and  $d(s,\hat{s})$ are of the form
\begin{subequations}\label{eq:best:setup}
\begin{IEEEeqnarray}{rCl}
\rho_X(x) && \begin{cases}
    = a_1 D(p_{Y|X}(.|x)||p_{Y}(.))+a_2   & \mbox{if } p_{X}(x)>0 \\
    \geq a_1 D(p_{Y|X}(.|x)||p_{Y}(.))+a_2 & \mbox{ else }
  \end{cases} \label{eq:besta}\\	
d(s,\hat{s})&&\hspace{8pt}=-b_1\log (p_{S|\hat{S}}(s|\hat{s}))+d_0(s)\label{eq:bestb}\\
\rho_{J|X}(j|x)&& \begin{cases}
    =c_1 \sum_{s,y,\hat{s}} p_S(s)p_{X|S}(x|s) p_{Y|X,J}(y|x,j) p_{\hat{S}|Y}(\hat{s}|y)d(s,\hat{s})+c_2  & \mbox{ if } p_{J|X}(j|x)>0 \\
    \geq c_1 \sum_{s,y,\hat{s}} p_S(s)p_{X|S}(x|s) p_{Y|,X,J}(y|x,j) p_{\hat{S}|Y}(\hat{s}|y)d(s,\hat{s})+c_2 & \mbox{ else }
  \end{cases} \label{eq:bestc}
\end{IEEEeqnarray}
\end{subequations}
%
%
    %
%
for some constants $a_1> 0$, $b_1> 0$, $c_1>0$, $a_2$, $c_2$, and some function $d_0(s)$. 
Then, the single-letter communication strategy
$(p_{X|S}, p_{\hat{S}|Y})$ and the i.i.d. jamming strategy $p_{J|X}$ form a Nash equilibrium for the JSCCSJ $(p_S, p_{Y|X,J}, \rho_X, \rho_{J|X}, d)$ game.
Here, $p_X(x)=\sum_{s} p_{S}(s)p_{X|S}(x|s)$.
\end{theorem}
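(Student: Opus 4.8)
The plan is to establish the two inequalities of the saddle-point condition \eqref{eq:optimality} separately: that the single-letter code $(p_{X|S},p_{\hat S|Y})$ is a best response to the i.i.d.\ jammer (the right inequality), and that the i.i.d.\ jammer $p_{J|X}$ is a best response to the single-letter code (the left inequality). For the first, I would absorb the fixed i.i.d.\ jammer into the channel: when the jammer plays $p_{J|X}$, the user faces the memoryless effective channel $p_{Y|X}(y|x)=\sum_j p_{J|X}(j|x)\,p_{Y|X,J}(y|x,j)$, and the problem reduces to a jammer-free matched source--channel problem, with induced input law $p_X(x)=\sum_s p_S(s)p_{X|S}(x|s)$. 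Conditions \eqref{eq:besta} and \eqref{eq:bestb} are exactly the matching conditions for this effective channel, so I would reprove optimality of uncoded transmission directly.

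For the converse on the user side, take any $(f,g)\in\mathcal C^n(P_U)$. Using \eqref{eq:bestb}, $\frac1n\mathbb E[d(S^n,\hat S^n)]=b_1\,\frac1n\sum_i\mathbb E[-\log p_{S|\hat S}(S_i|\hat S_i)]+\frac1n\sum_i\mathbb E[d_0(S_i)]$, and since $-\mathbb E[\log p_{S|\hat S}(S_i|\hat S_i)]\ge H(S_i|\hat S_i)=H(S)-I(S_i;\hat S_i)$ (the gap being a relative entropy), the distortion is lower-bounded by $b_1H(S)+\mathbb E[d_0(S)]-\frac{b_1}{n}\sum_i I(S_i;\hat S_i)$. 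The source being i.i.d.\ gives $\sum_i I(S_i;\hat S_i)\le I(S^n;\hat S^n)$, data processing gives $I(S^n;\hat S^n)\le I(X^n;Y^n)\le\sum_i I(X_i;Y_i)$, and the variational bound $I(X_i;Y_i)\le\sum_x p_{X_i}(x)D(p_{Y|X}(\cdot|x)\|p_Y)$ combined with \eqref{eq:besta} yields $\sum_i I(X_i;Y_i)\le\frac1{a_1}\big(\sum_i\mathbb E[\rho_X(X_i)]-na_2\big)\le\frac{n(P_U-a_2)}{a_1}$. Chaining these gives a universal lower bound on the distortion, and the single-letter scheme meets it because each inequality becomes an equality: the relative-entropy gap vanishes as $p_{S|\hat S}$ is then the true posterior, $Y_i\sim p_Y$ makes the variational bound tight, and \eqref{eq:besta} holds with equality on $\mathrm{supp}(p_X)$. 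Hence the single-letter code is a best response.

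For the jammer side, fix the single-letter encoder and decoder and let the jammer use an arbitrary $\bar\lambda\in\mathcal L^n(P_J)$. The crucial structural observation is that, because the encoder is memoryless and the source i.i.d., $S_i$ is conditionally independent of $X^{\setminus i}$ given $X_i$, hence of $J_i$ given $X_i$; therefore both the per-letter expected distortion $\mathbb E[d(S_i,\hat S_i)]$ and the per-letter expected jammer cost depend on $\bar\lambda$ only through the induced marginal conditional law $w_i(j|x)=\Pr(J_i=j\mid X_i=x)$. Averaging these laws over $i$ reduces the block problem to a single-letter one: choose a single $w(j|x)$ to maximize the average distortion subject to average cost $\le P_J$. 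Condition \eqref{eq:bestc} then ties the jammer's per-action cost to the distortion that action induces---affinely, with positive slope $c_1$, with equality on $\mathrm{supp}(p_{J|X})$ and ``$\ge$'' off it---so that the expected cost is an increasing affine function of the induced average distortion. Consequently every feasible $w$ satisfies $\text{distortion}\le(P_J-c_2)/c_1$, a value attained by $p_{J|X}$ operating at the budget $P_J$, giving the left inequality.

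The main obstacle is the jammer-side single-letterization: a priori the jammer's non-causal access to the \emph{entire} block $X^n$ could let it correlate $J_i$ with symbols other than $X_i$ and redistribute its cost budget across the block, so the heart of the argument is showing that this freedom confers no advantage against a memoryless user code---that distortion and cost depend only on the single-letter marginals $\{w_i\}$. A secondary but necessary point is verifying mutual consistency of the two operating points: the converse on the user side is tight only at channel cost $P_U$ and the jammer's bound is attained only at cost $P_J$, so one must check that the posited laws $p_{X|S}$ and $p_{J|X}$ meet their budgets with equality and that the cost--distortion relation of \eqref{eq:bestc} closes, so that the chain of inequalities assembles into the saddle-point relation \eqref{eq:optimality}.
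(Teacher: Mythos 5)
Your proposal is correct and follows essentially the same route as the paper: fix the i.i.d.\ jammer, absorb it into the memoryless effective channel $p_{Y|X}(y|x)=\sum_j p_{Y|X,J}(y|x,j)p_{J|X}(j|x)$ and invoke the matched source--channel conditions for the user side (the paper simply cites Gastpar et al.'s Theorem~6 where you reprove the converse chain); then fix the single-letter code, single-letterize the jammer by noting that per-letter distortion and cost depend only on the marginals $p_{J_i|X_i}$ (the paper packages this as monotonicity and linearity of the distortion--cost function $D(P_J)$), and use the affine tie between jammer cost and induced distortion to show $p_{J|X}$ is the best i.i.d.\ response. The only substantive difference is expository: you make explicit the Markov chain $S_i \to X_i \to J_i$ that justifies the jammer-side single-letterization, a step the paper leaves implicit in inequality $(c)$ of the lemma inside its Theorem~2.
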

Such a single-letter coding strategy is commonly referred to as an 
\textit{uncoded communication} strategy. Extending the nomenclature 
in~\cite{gastpar}, we call a JSCCSJ a \textit{matched source-jammer-channel} 
system if uncoded communication strategy and i.i.d. jamming form
a Nash equilibrium.

{\em Proof of Theorem~\ref{thm:main:result}:}
The proof has two parts.\\
{\em Part A:} In this part, we show that when the jamming signal is i.i.d. with a pmf $p_{J|X}$, if $\rho_X$ and $d$ are chosen as in~\eqref{eq:best:setup}, then the single-letter  
communication scheme $(p_{X|S}, p_{\hat{S}|Y})$ minimizes the average 
distortion.

Since the channel $p_{Y|X,J}$ is a memoryless channel and the jammer strategy $p_{J|X}$ is an i.i.d strategy, the resulting channel between the encoder and the decoder is also a memoryless channel with
\begin{equation*}\label{eq:Y:X:when:jammer:iid}
p_{Y|X}(y|x)=\sum_{j\in \mathcal{J}}p_{Y|X,J}(y|x,j)p_{J|X}(j|x).
\end{equation*}
For this channel transition probability, the conditions~\eqref{eq:besta} and~\eqref{eq:bestb} are the same conditions
as in~\cite[Theorem 6]{gastpar} under which the uncoded communication strategy
$(p_{X|S}, p_{\hat{S}|Y})$ minimizes the average distortion. Thus,
the result follows from~\cite[Theorem 6]{gastpar}.

\noindent
{\em Part B:}
In this part, we show that when the user employs uncoded communication 
strategy $(p_{X|S},p_{\hat{S}|Y})$, and $d$ and $\rho_{J|X}$ are
as in~\eqref{eq:bestb} and~\eqref{eq:bestc}, then i.i.d jamming strategy 
$p_{J|X}$ maximizes the average distortion.

Let us fix the user strategy to be uncoded communication where the encoder and decoder pmfs are $p_{X|S}$ and $p_{\hat{S}|Y}$ respectively.

To begin with, for a given distortion measure $d$, jammer's
cost function $\rho_{J|X}$, and a fixed uncoded user strategy
$(p_{X|S},p_{\hat{S}|Y})$, 
let us define the jammer's \textit{distortion cost function} as 
\begin{equation}\label{eq:jammer:distortion:cost:function}
D_{p_{X|S},p_{\hat{S}|Y}}({P}_J)=\max_{p_{J|X}:\mathbb{E}[\mathbb{E}[\rho_{J|X}(J|X)]]\leq {P}_J} \mathbb{E}[d(S,\hat{S})].
\end{equation} 
For simplicity of notation, we denote this distortion cost function
as $D({P}_J)$ without the subscript. Note that the above maximization
is over i.i.d. jamming strategies, and it is not obvious that the jammer
can not exceed this distortion by employing non-i.i.d. jamming. 
The following theorem establishes this non-trivial result.
\begin{theorem}\label{thm:jammer:iid}
For a given source $p_S$, a channel $p_{Y|X,J}$, an uncoded user strategy $(p_{X|S},p_{\hat{S}|Y})$, a distortion measure $d$, and a jammer cost function
$\rho_{J|X}$; the maximum average distortion (per source data bit) inflicted by the correlated jammer is $D({P}_J)$.
\end{theorem}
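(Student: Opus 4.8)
The plan is to prove the two inequalities separately. The achievability direction---that an i.i.d.\ jammer can inflict average distortion $D(P_J)$---is immediate: any i.i.d.\ strategy $p_{J|X}$ feasible for the per-letter cost $P_J$ is also feasible for the $n$-letter constraint and, since the encoder, channel and decoder are all memoryless, yields average distortion exactly $\mathbb{E}[d(S,\hat S)]$; taking the maximizing $p_{J|X}$ in~\eqref{eq:jammer:distortion:cost:function} gives the bound. The entire content of the theorem is therefore the \emph{converse}: no correlated (possibly non-i.i.d., possibly randomized) jammer observing $X^n$ non-causally can exceed $D(P_J)$. I would establish this by single-letterization.

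First I would fix an arbitrary feasible jamming strategy $\bar\lambda$, writing $J^n=\bar\lambda(X^n,W)$ with $W$ private randomness independent of $(S^n,X^n)$, and for each coordinate $i$ define the \emph{induced} conditional marginal $q_i(j\mid x)\triangleq \Pr[J_i=j\mid X_i=x]$, which is a legitimate element of the i.i.d.\ strategy space. The crucial structural step is to argue that, coordinate by coordinate, the joint law of $(S_i,X_i,J_i,Y_i,\hat S_i)$ factors exactly as it would under the i.i.d.\ strategy $q_i$. This hinges on the Markov chain $S_i - X_i - J_i$: because the encoder is memoryless, $p_{X^n\mid S^n}=\prod_k p_{X\mid S}(x_k\mid s_k)$, whence $X_{\sim i}$ (and the independent $W$) are independent of $S_i$ given $X_i$; since $J_i$ is a function of $(X^n,W)$, it follows that $J_i$ is conditionally independent of $S_i$ given $X_i$. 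Combining this with the memorylessness of the channel ($Y_i$ depends only on $(X_i,J_i)$) and of the decoder ($\hat S_i$ depends only on $Y_i$), the per-letter distortion $\mathbb{E}[d(S_i,\hat S_i)]$ and the per-letter cost $\mathbb{E}[\rho_{J|X}(J_i,X_i)]$ coincide with the single-letter functionals evaluated at $q_i$ (using the fact that $X_i\sim p_X$ for every $i$). Hence the average distortion equals $\frac1n\sum_i d_{\mathrm{s}}(q_i)$ and the constraint reads $\frac1n\sum_i c_{\mathrm{s}}(q_i)\le P_J$, where $d_{\mathrm{s}}$ and $c_{\mathrm{s}}$ denote the (affine) single-letter distortion and cost functionals of a conditional pmf.

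To finish I would invoke a value-function argument. Writing $P_i\triangleq c_{\mathrm{s}}(q_i)$, each $q_i$ is feasible at budget $P_i$, so $d_{\mathrm{s}}(q_i)\le D(P_i)$ by the definition~\eqref{eq:jammer:distortion:cost:function}. The map $P\mapsto D(P)$ is non-decreasing (a larger budget enlarges the feasible set) and concave: since both $d_{\mathrm{s}}$ and $c_{\mathrm{s}}$ are affine in the conditional pmf and the strategy space is convex, for optimizers $q^{(1)},q^{(2)}$ at budgets $P^{(1)},P^{(2)}$ the convex combination $\theta q^{(1)}+(1-\theta)q^{(2)}$ is feasible at budget $\theta P^{(1)}+(1-\theta)P^{(2)}$ and attains distortion $\theta D(P^{(1)})+(1-\theta)D(P^{(2)})$, which gives concavity. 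Jensen's inequality then yields $\frac1n\sum_i d_{\mathrm{s}}(q_i)\le \frac1n\sum_i D(P_i)\le D\big(\tfrac1n\sum_i P_i\big)\le D(P_J)$, the last step by monotonicity since $\frac1n\sum_i P_i\le P_J$. This is the desired bound.

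I expect the main obstacle to be the conditional-independence step $S_i - X_i - J_i$: it is the only place where the non-causal, full-block observation of the jammer enters, and it is precisely the memoryless structure of the \emph{encoder} (not merely of the channel) that prevents $J_i$ from carrying information about $S_i$ beyond what $X_i$ already reveals. Once this factorization is secured, the reduction to a per-letter problem and the concavity-plus-Jensen argument are routine.
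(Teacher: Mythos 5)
Your proof is correct and follows the same skeleton as the paper's: decompose the block distortion into per-letter terms, bound each term by the value function $D(\cdot)$ evaluated at the per-letter jammer cost, and combine using monotonicity and a convexity property of $D(\cdot)$. You do, however, supply two things the paper elides, and both are worth noting. First, the paper's step bounding $\mathbb{E}[d(S_i,\hat{S}_i)]$ by $D\bigl(\mathbb{E}[\rho_{J|X}(J_i|X_i)]\bigr)$ is justified only by ``the definition of $D(P)$,'' but that definition is a maximum over \emph{i.i.d.} strategies; your explicit reduction via the induced marginal $q_i(j|x)=\Pr[J_i=j\mid X_i=x]$ and the Markov chain $S_i - X_i - J_i$ (which holds because the encoder is memoryless, so $(X_{\sim i},W)$ is conditionally independent of $S_i$ given $X_i$) is exactly the missing argument that makes this step legitimate. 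Second, the paper claims $D(P_J)$ is \emph{linear} and uses that to pull the average inside $D(\cdot)$ with equality; its own proof of that claim only establishes that the mixture strategy is feasible at the mixed budget and attains the mixed distortion, i.e.\ $D(P_{J,\beta})\ge \beta D(P_{J,1})+(1-\beta)D(P_{J,2})$, which is concavity, not linearity. Your version, which uses concavity together with Jensen's inequality to get $\frac{1}{n}\sum_i D(P_i)\le D\bigl(\frac{1}{n}\sum_i P_i\bigr)$, is the argument that actually closes the proof, and it suffices for the theorem.
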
	
Before proving this result, we state and prove some auxiliary results on $D(P_J)$.
\begin{claim}
$D(P_J)$ is a non-decreasing function.
\end{claim}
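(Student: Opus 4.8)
The plan is to exploit the fact that the constraint defining the jammer's feasible set in~\eqref{eq:jammer:distortion:cost:function} is an \emph{upper bound} on the expected cost, so that enlarging the budget $P_J$ can only enlarge the set of admissible i.i.d.\ strategies over which the distortion is maximized. Since the maximum of a fixed objective over a larger feasible set is at least as large as the maximum over a smaller one, the monotonicity of $D(P_J)$ follows immediately. No optimization machinery beyond this monotonicity-of-supremum observation is required.

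Concretely, I would fix two budgets $P_J^{(1)}\le P_J^{(2)}$ and let $\mathcal{F}(P_J)=\{p_{J|X}:\mathbb{E}[\mathbb{E}[\rho_{J|X}(J|X)]]\le P_J\}$ denote the corresponding feasible sets appearing in~\eqref{eq:jammer:distortion:cost:function}. Any $p_{J|X}\in\mathcal{F}(P_J^{(1)})$ satisfies $\mathbb{E}[\mathbb{E}[\rho_{J|X}(J|X)]]\le P_J^{(1)}\le P_J^{(2)}$, hence $p_{J|X}\in\mathcal{F}(P_J^{(2)})$; that is, $\mathcal{F}(P_J^{(1)})\subseteq\mathcal{F}(P_J^{(2)})$. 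Taking the maximum of the objective $\mathbb{E}[d(S,\hat{S})]$ over the smaller set can therefore not exceed the maximum over the larger set, which yields $D(P_J^{(1)})\le D(P_J^{(2)})$, exactly the asserted non-decreasing property.

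There is no real obstacle here; the only point worth flagging is that the objective $\mathbb{E}[d(S,\hat{S})]$ is a fixed functional of $p_{J|X}$ (through the induced joint law of $S,X,J,Y,\hat{S}$) and does \emph{not} depend on $P_J$, so the same functional is being maximized in both problems and the set inclusion alone drives the inequality. If one wishes to avoid assuming the maximum is attained, the identical argument goes through with $\max$ replaced by $\sup$, since the supremum over a subset never exceeds the supremum over the superset.
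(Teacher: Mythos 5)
Your argument is correct and is precisely the ``straightforward'' nested-feasible-set argument that the paper's one-line proof alludes to: since the cost constraint is an upper bound, $\mathcal{F}(P_J^{(1)})\subseteq\mathcal{F}(P_J^{(2)})$ whenever $P_J^{(1)}\le P_J^{(2)}$, and the maximum of the fixed objective over the larger set dominates. Nothing is missing; your remark about replacing $\max$ by $\sup$ if attainment is in doubt is a sensible precaution but not needed here since the feasible set of conditional pmfs is compact and the objective is continuous (indeed linear) in $p_{J|X}$.
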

\begin{proof}
The proof follows in a straightforward manner.
\end{proof}
\begin{claim}
$D(P_J)$ is a linear function.
\end{claim}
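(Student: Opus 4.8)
The plan is to exploit that, once the uncoded user strategy $(p_{X|S},p_{\hat S|Y})$ is held fixed, both the objective and the cost constraint in \eqref{eq:jammer:distortion:cost:function} are \emph{affine} functionals of the jamming conditional pmf $p_{J|X}$. First I would write the average distortion as a sum over input pairs,
\begin{equation*}
\mathbb{E}[d(S,\hat S)] \;=\; \sum_{x,j} p_X(x)\,p_{J|X}(j|x)\,\bar\delta(x,j),
\end{equation*}
where $\bar\delta(x,j):=\mathbb{E}[d(S,\hat S)\mid X=x,J=j]=\sum_{s,y,\hat s} p_{S|X}(s|x)\,p_{Y|X,J}(y|x,j)\,p_{\hat S|Y}(\hat s|y)\,d(s,\hat s)$ is the per-symbol conditional distortion, and the jammer cost as $\mathbb{E}[\mathbb{E}[\rho_{J|X}(J|X)]]=\sum_{x,j} p_X(x)\,p_{J|X}(j|x)\,\rho_{J|X}(j|x)$. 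Both are linear in $p_{J|X}$ over the convex set of conditional pmfs, so \eqref{eq:jammer:distortion:cost:function} is a linear program whose value $D(P_J)$ is, \emph{a priori}, only piecewise-linear and concave in $P_J$.

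The crux is that the summation inside \eqref{eq:bestc} is exactly this per-symbol distortion $\bar\delta(x,j)$. Substituting \eqref{eq:bestc}, the per-symbol cost equals $c_1\,\bar\delta(x,j)+c_2$ on the support of $p_{J|X}$ (and is $\ge$ this off support); weighting by $p_X(x)\,p_{J|X}(j|x)$ and summing, the two functionals collapse into one affine relation,
\begin{equation*}
\mathbb{E}[\mathbb{E}[\rho_{J|X}(J|X)]] \;=\; c_1\,\mathbb{E}[d(S,\hat S)] + c_2 ,
\end{equation*}
with ``$\ge$'' in place of ``$=$'' whenever the strategy places mass where \eqref{eq:bestc} is slack. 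Hence every achievable (cost, distortion) pair lies on, or above, a single line of positive slope $1/c_1$.

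Given this, I would close the argument in two steps. \emph{Upper bound}: any feasible $p_{J|X}$ obeys $c_1\,\mathbb{E}[d(S,\hat S)]+c_2 \le \mathbb{E}[\mathbb{E}[\rho_{J|X}(J|X)]] \le P_J$, so $\mathbb{E}[d(S,\hat S)] \le (P_J-c_2)/c_1$ because $c_1>0$. \emph{Achievability}: I would exhibit a feasible strategy supported where \eqref{eq:bestc} holds with equality whose cost is exactly $P_J$; since the strategies on that support form a convex set and the cost is linear (hence continuous) in $p_{J|X}$, their attainable costs form an interval, so the budget can be met with equality throughout the operating range, giving $\mathbb{E}[d(S,\hat S)]=(P_J-c_2)/c_1$. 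Together these yield $D(P_J)=(P_J-c_2)/c_1$, which is affine in $P_J$, proving the claim.

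I expect the achievability/tightness step to be the main obstacle: one must show the cost constraint is \emph{active} at the optimum and is met with equality by an \emph{on-support} strategy, rather than by a strategy sitting strictly above the line $c_1 D+c_2$. This is precisely where the ``$\ge$ off support'' half of \eqref{eq:bestc} enters — it ensures that diverting jamming mass onto pairs where the cost is slack can only raise the cost relative to the distortion, so such deviations never improve the distortion-per-unit-cost and never beat the on-support value. Combined with the preceding claim that $D(P_J)$ is non-decreasing, this pins the value function to the straight line $(P_J-c_2)/c_1$ over the entire feasible range of $P_J$.
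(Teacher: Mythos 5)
There is a genuine gap, and it is primarily one of scope. This claim is an auxiliary step inside the proof of Theorem~\ref{thm:jammer:iid}, which is stated for an \emph{arbitrary} jammer cost function $\rho_{J|X}$; the matched-cost condition \eqref{eq:bestc} is not in force at that point (it only enters later, in Lemma~\ref{lem:jammer:cost:function}). Your entire argument pivots on substituting \eqref{eq:bestc} so as to collapse cost and distortion into the single affine relation $\mathbb{E}[\mathbb{E}[\rho_{J|X}(J|X)]]=c_1\,\mathbb{E}[d(S,\hat S)]+c_2$, so it can at best handle that special case and does not prove the claim in the generality in which it is stated and used. The paper's route is entirely different and structure-free: it takes maximizers $p^1_{J|X}$ and $p^2_{J|X}$ for budgets $P_{J,1}$ and $P_{J,2}$, forms the mixture $\beta p^1_{J|X}+(1-\beta)p^2_{J|X}$, notes that it is feasible for the budget $\beta P_{J,1}+(1-\beta)P_{J,2}$, and uses only the linearity of $\mathbb{E}[d(S,\hat S)]$ in $p_{J|X}$ to conclude it attains $\beta D_1+(1-\beta)D_2$.

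Even granting \eqref{eq:bestc}, your conclusion $D(P_J)=(P_J-c_2)/c_1$ cannot hold over the whole range of $P_J$: over finite alphabets $\mathbb{E}[d(S,\hat S)]\le\max_{s,\hat s}d(s,\hat s)$, so once $P_J$ exceeds the cost of the unconstrained maximizer the budget constraint goes slack and $D(P_J)$ must flatten out. The ``budget can be met with equality throughout the operating range'' step is exactly where this breaks, and you correctly flagged it as the weak point. What the program \eqref{eq:jammer:distortion:cost:function} generically yields --- a linear objective and one linear budget constraint over the simplex of conditional pmfs --- is a \emph{concave}, piecewise-linear, non-decreasing value function, not a linear one. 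In fairness, the paper's own mixing argument likewise only establishes the inequality $D(\beta P_{J,1}+(1-\beta)P_{J,2})\ge\beta D(P_{J,1})+(1-\beta)D(P_{J,2})$, i.e.\ concavity, and then asserts equality; but concavity together with monotonicity is all that step $(d)$ of the proof of Theorem~\ref{thm:jammer:iid} actually needs (an inequality $\le$ suffices there), so concavity for a general $\rho_{J|X}$ is the statement you should be aiming at, and it requires no structural assumption on the cost function.
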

\begin{proof}
Let  $(P_{J,1},D_1)$ and $(P_{J,2},D_2)$ be such that 
\begin{equation}\label{eq:D:i:P:i}
D_i= \max_{p_{J|X}:\mathbb{E}[\mathbb{E}[\rho_{J}(J)]]\leq P_{J,i}} \mathbb{E}[d(S,\hat{S})]
\end{equation}
and let $p_{J|X}^1$ and $p_{J|X}^2$ be the corresponding maximizing distributions in~\eqref{eq:D:i:P:i}. Now let us define a new jamming distribution $p_{J|X}^\beta$ where, $p_{J|X}^\beta=\beta p_{J|X}^1+(1-\beta)p_{J|X}^2$. Also, let $P_{J,\beta}=\beta P_{J,1}+(1-\beta) P_{J,2}$. Since $\mathbb{E}[d(S,\hat{S})]$ is a linear function of $p_{J|X}$, therefore $D(P_{J,\beta})=\beta D(P_{J,1})+(1-\beta) D(P_{J,2})$. Hence, it follows that $D(P_J)$ is linear in $P_{J}$.
\end{proof}
We now give the proof of Theorem~\ref{thm:jammer:iid}.
\begin{proof}[Proof of Theorem~\ref{thm:jammer:iid}]
To prove this result, it is sufficient to establish that the following lemma is true. 
\begin{lemma}
For a given source $p_S$, a channel $p_{Y|X,J}$, an uncoded user strategy $(p_{X|S},p_{\hat{S}|Y})$, a distortion measure $d$, a jammer cost function
$\rho_{J|X}$; and any jamming strategy $\bar{\lambda}\in\mathcal{L}^n(P_J)$ which results in average (per-letter) distortion $D$, $D\leq D(P_J)$.
%
\end{lemma}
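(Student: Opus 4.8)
The plan is to reduce an arbitrary $n$-letter jamming strategy to a single i.i.d.\ strategy through a per-letter marginalization followed by a time-averaging argument, exploiting the fact (already used in the preceding claim) that both the per-letter distortion and the per-letter cost are \emph{linear} in the jamming conditional. First I would fix an arbitrary $\bar{\lambda}\in\mathcal{L}^n(P_J)$ and consider the joint law it induces on $(S^n,X^n,J^n,Y^n,\hat{S}^n)$. Because the source is i.i.d.\ and the encoder is the memoryless map $p_{X|S}$, each pair $(S_i,X_i)$ has the single-letter marginal $p_S(s)p_{X|S}(x|s)$, and $X^n$ is i.i.d.\ with marginal $p_X$. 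The average distortion is $D=\frac1n\sum_{i=1}^n\mathbb{E}[d(S_i,\hat{S}_i)]$.

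The first substantive step is to show that the induced per-letter marginal $q^{(i)}_{J|X}(j|x):=p_{J_i|X_i}(j|x)$ is a legitimate i.i.d.\ jamming candidate, i.e.\ that the Markov chain $S_i - X_i - J_i$ holds so that $q^{(i)}_{J|X}$ really depends on $X$ alone. This is where the assumption that the jammer observes only $X^n$ is essential: since $J^n$ is a (possibly randomized) function of $X^n$, we have $S^n - X^n - J^n$, and since the memoryless encoder together with the i.i.d.\ source makes the remaining inputs $\{X_k\}_{k\ne i}$ independent of $(S_i,X_i)$, marginalizing over those inputs yields $p_{J_i|X_i,S_i}=p_{J_i|X_i}$. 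Thus $q^{(i)}_{J|X}$ is a valid element of the i.i.d.\ strategy class.

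Next I would use the memorylessness of the channel $p_{Y|X,J}$ and of the decoder $p_{\hat{S}|Y}$ to express the per-letter distortion purely in terms of $q^{(i)}_{J|X}$. Since $(S_i,X_i)$ has the correct single-letter law and $(Y_i,\hat{S}_i)$ are produced from $(X_i,J_i)$ by the fixed kernels, $\mathbb{E}[d(S_i,\hat{S}_i)]$ equals exactly the distortion $\Phi(q^{(i)}_{J|X})$ that an i.i.d.\ jammer using $q^{(i)}_{J|X}$ would inflict, and likewise the per-letter cost equals $\Psi(q^{(i)}_{J|X})$, where $\Phi$ and $\Psi$ are linear in the jamming conditional. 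Averaging over $i$, set $\bar{q}_{J|X}:=\frac1n\sum_{i=1}^n q^{(i)}_{J|X}$, which is again a conditional pmf (a convex combination of conditional pmfs for each $x$). By linearity, $D=\frac1n\sum_i\Phi(q^{(i)}_{J|X})=\Phi(\bar{q}_{J|X})$, while the cost of $\bar{q}_{J|X}$ equals $\frac1n\sum_i\Psi(q^{(i)}_{J|X})\le P_J$. Hence $\bar{q}_{J|X}$ is a feasible i.i.d.\ strategy, so by the definition of $D(P_J)$ as the maximum distortion over such strategies, $D=\Phi(\bar{q}_{J|X})\le D(P_J)$, which is the claim.

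The main obstacle I expect is precisely the Markov-chain step: one must argue carefully that collapsing to the marginal $p_{J_i|X_i}$ does not secretly allow the effective single-letter strategy to depend on $S_i$, which an i.i.d.\ jammer is not permitted to do. Once the independence of $\{X_k\}_{k\ne i}$ from $(S_i,X_i)$ supplied by the i.i.d.\ source and memoryless encoder is in hand, the remainder is routine linearity and convexity, and no appeal to the ordering of the distortion functional is even needed since feasibility of $\bar{q}_{J|X}$ gives the bound directly.
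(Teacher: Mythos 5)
Your proof is correct, and its core is the same single-letterization-plus-time-averaging argument the paper uses. The differences are worth noting. First, you make explicit the step that the paper's chain of inequalities leaves implicit at step $(c)$: that the induced per-letter conditional $p_{J_i|X_i}$ is a \emph{legitimate} i.i.d.\ jamming strategy, which requires the Markov chain $S_i - X_i - J_i$ (obtained from $S^n - X^n - J^n$ together with the independence of $(S_i,X_i)$ from $\{X_k\}_{k\neq i}$ supplied by the i.i.d.\ source and memoryless encoder). Without this, the per-letter distortion $\mathbb{E}[d(S_i,\hat{S}_i)]$ need not equal the distortion inflicted by an admissible single-letter jammer, so this is a genuine gap in the paper's write-up that you fill. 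Second, you finish differently: the paper bounds each $\mathbb{E}[d(S_i,\hat{S}_i)]$ by $D(\cdot)$ evaluated at the $i$-th letter's cost and then invokes its two auxiliary claims (linearity and monotonicity of $D(P)$), whereas you form the averaged conditional $\bar{q}_{J|X}=\frac{1}{n}\sum_i q^{(i)}_{J|X}$, check that it is feasible, and conclude directly from the definition of $D(P_J)$. Your route is more self-contained --- it needs only linearity of the distortion and cost functionals in $p_{J|X}$, not the separate structural claims about $D(\cdot)$ --- while the paper's route packages the convexity argument into reusable properties of the distortion cost function. Both are valid; yours is arguably the cleaner and more complete proof of this particular lemma.
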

\begin{proof}
For the average per-letter distortion $D$, we have
\begin{subequations}
\begin{eqnarray}
                                   D&\stackrel{(a)}= & \frac{1}{n}\mathbb{E}[d(S^n,\hat{S}^n)]\\
                                    &\stackrel{(b)}=& \frac{1}{n}\sum_{i=1}^n  \mathbb{E}[d(S_i,\hat{S}_i)]\\
                                    &\stackrel{(c)}\leq & \frac{1}{n}\sum_{i=1}^n  D\left(\mathbb{E}[\mathbb{E}[\rho_{J|X}(J_i|X_i)]]\right)\\ 
																		&\stackrel{(d)}= &   D\left(\frac{1}{n}\sum_{i=1}^n \mathbb{E}[\mathbb{E}[\rho_{J|X}(J_i|X_i)]]\right)\\ 
															&\stackrel{(e)}\leq &D(P_J)
\end{eqnarray}
\end{subequations}
Here, $(a)$ and $(b)$ follow from~\eqref{eq:average:distortion:definition} and ~\eqref{eq:d:definition} respectively. For $(c)$, we use the definition of $D(P)$ in~\eqref{eq:jammer:distortion:cost:function}. Finally, while $(d)$ follows from the linearity of $D(P)$ and the linearity of expectation, $(e)$ is true owing to the non-decreasing nature of $D(P_J)$. 
\end{proof}
This shows that i.i.d jamming maximizes average distortion and hence, concludes the proof of Theorem~\ref{thm:jammer:iid}.
\end{proof}
We will now show in the following lemma, that under the setup of Theorem~\ref{thm:jammer:iid}, 
if $d$ and $\rho_{J|X}$ are of the form~\eqref{eq:bestb} and~\eqref{eq:bestc}
respectively, then the i.i.d. jamming strategy $p_{J|X}$ maximizes
the average distortion.
\begin{lemma}\label{lem:jammer:cost:function}
For a given source $p_S$, a channel $p_{Y|X,J}$, an uncoded user strategy $(p_{X|S},p_{\hat{S}|Y})$, if $d$ and $\rho_{J|X}$ are as in~\eqref{eq:bestb}
and~\eqref{eq:bestc}; then the i.i.d. jammer strategy $p_{J|X}$ 
achieves $D({P}_J)$, where ${P}_J=\mathbb{E}[\rho_{J|X}(J|X)]$ with
the expectation taken over $p_Sp_{X|S}p_{J|X}$.
\end{lemma}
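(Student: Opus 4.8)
The plan is to read the single-letter maximization that defines $D(P_J)$ in~\eqref{eq:jammer:distortion:cost:function} as a linear program over the jammer's conditional law and to certify the given $p_{J|X}$ as a maximizer, with~\eqref{eq:bestc} playing the role of the complementary-slackness (stationarity) condition. Fixing the uncoded user strategy, I would first introduce the per-symbol distortion contribution
\[
\delta(x,j) := \mathbb{E}\big[d(S,\hat S)\,\big|\,X=x,\,J=j\big] = \sum_{s,y,\hat s} p_{S|X}(s|x)\,p_{Y|X,J}(y|x,j)\,p_{\hat S|Y}(\hat s|y)\,d(s,\hat s),
\]
so that for any competing i.i.d.\ strategy $\tilde p_{J|X}$ the objective is the $p_X$-weighted average $\mathbb{E}[d(S,\hat S)] = \sum_{x,j} p_X(x)\,\tilde p_{J|X}(j|x)\,\delta(x,j)$ and the constraint is the analogous average $\mathbb{E}[\rho_{J|X}(J|X)] = \sum_{x,j} p_X(x)\,\tilde p_{J|X}(j|x)\,\rho_{J|X}(j|x)$. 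Both are affine in $\tilde p_{J|X}$, and the feasible set is a product of simplices (one per $x$) cut by the budget halfspace, so $D(P_J)$ is the value of this linear program and the task reduces to exhibiting $p_{J|X}$ as an optimal point.

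The structural engine is~\eqref{eq:bestc}: read with the bracketed sum identified as the jammer-induced conditional distortion, and because $c_1>0$, it states precisely that $\rho_{J|X}(j|x)$ is an \emph{increasing affine} function of $\delta(x,j)$, equal to $c_1\delta(x,j)+c_2$ on the support of $p_{J|X}(\cdot|x)$ and at least that value off the support; equivalently $\delta(x,j)\le\big(\rho_{J|X}(j|x)-c_2\big)/c_1$ everywhere, with equality wherever $p_{J|X}(j|x)>0$. Substituting this bound into the objective, I would obtain, for any feasible $\tilde p_{J|X}$,
\[
\mathbb{E}[d(S,\hat S)] \;\le\; \frac{1}{c_1}\sum_{x,j} p_X(x)\,\tilde p_{J|X}(j|x)\,\rho_{J|X}(j|x)\;-\;\frac{c_2}{c_1}\sum_{x,j} p_X(x)\,\tilde p_{J|X}(j|x).
\]
The first sum is a positive multiple of the cost and so is bounded by $\tfrac{1}{c_1}P_J$ via the budget, while the second collapses to the constant $\tfrac{c_2}{c_1}$ because $\sum_{x,j}p_X(x)\tilde p_{J|X}(j|x)=1$ regardless of the strategy. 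Hence every feasible strategy obeys $\mathbb{E}[d(S,\hat S)]\le \tfrac{1}{c_1}P_J-\tfrac{c_2}{c_1}$.

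To finish, I would evaluate the same chain at $\tilde p_{J|X}=p_{J|X}$: the substitution becomes an equality because~\eqref{eq:bestc} holds with equality on the support of $p_{J|X}$, and the budget binds since the hypothesis fixes $P_J=\mathbb{E}[\rho_{J|X}(J|X)]$ computed under $p_Sp_{X|S}p_{J|X}$. Thus $p_{J|X}$ meets the upper bound $\tfrac{1}{c_1}P_J-\tfrac{c_2}{c_1}$ with equality and therefore attains the maximum, i.e.\ achieves $D(P_J)$. As a consistency check, this is compatible with the already-established linearity and monotonicity of $D(P_J)$: the optimal value is affine in $P_J$ and is realized on the budget boundary.

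The step I expect to be the main obstacle is the weighting bookkeeping that makes the two $p_X$-weighted averages align after the affine substitution, i.e.\ tracking exactly how the factor $p_X(x)$ enters, and confirming that the bracketed quantity in~\eqref{eq:bestc} is to be treated as the conditional distortion $\delta(x,j)$, so that the $\rho$-sum generated from the distortion is literally the cost that appears in the budget. The second subtlety is the off-support directions: for $j$ with $p_{J|X}(j|x)=0$ only the inequality branch of~\eqref{eq:bestc} is available, so I must verify that shifting a competitor's mass onto such $j$ can only lower the distortion relative to the budget it consumes — and this is exactly where $c_1>0$ (order-preservation of the affine map) and the strategy-independent cancellation of the $c_2$ term through per-$x$ normalization are used.
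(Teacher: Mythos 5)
Your proof is correct and is essentially the paper's own argument: the paper establishes the same fact by writing $\Delta_{p_{J|X}}-\Delta_{\tilde p_{J|X}}$ as a $(p_{J|X}-\tilde p_{J|X})$-weighted sum of the distortion kernel, replacing that kernel by $h(x,j)=(\rho_{J|X}(j|x)-c_2)/c_1$ (equality on the support of $p_{J|X}(\cdot|x)$, and the inequality preserved off the support because the weight there is $-\tilde p_{J|X}(j|x)\le 0$), and then comparing budgets --- which is exactly your complementary-slackness certificate phrased as a nonnegative difference rather than as the uniform upper bound $(P_J-c_2)/c_1$ attained by $p_{J|X}$. The bookkeeping point you flag is real but is inherited from the paper rather than a defect of your argument: as literally written, the bracketed sum in~\eqref{eq:bestc} equals $p_X(x)\,\mathbb{E}[d(S,\hat S)\mid X=x,J=j]$ rather than the conditional expectation $\delta(x,j)$, and under that literal reading neither your substitution nor the paper's identification of $\sum_{x,j}\rho_{J|X}(j|x)\,p_{J|X}(j|x)$ with the expected cost matches the $p_X$-weighted budget $\mathbb{E}[\mathbb{E}[\rho_{J|X}(J|X)]]\le P_J$, so the consistent reading (also forced by the Gaussian example) is the one you adopt, with $p_{S|X}(s|x)$ in place of $p_S(s)p_{X|S}(x|s)$.
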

\textit{Proof:}
Let $\tilde{p}_{J|X}$ be another feasible i.i.d. jamming  strategy.
Let $\Delta_{p_{J|X}}$ and $\Delta_{\tilde{p}_{J|X}}$ denote the average distortion
resulting from the jammer strategies $p_{J|X}$ and $\tilde{p}_{J|X}$
respectively. Then, we have
\begin{align*}
& \Delta_{p_{J|X}} - \Delta_{\tilde{p}_{J|X}}\nonumber\\
&\,\,=\sum_{x,j}\left(\sum_{s,y,\hat{s}} p_S(s)p_{X|S}(x|s) p_{Y|X,J}(y|x,j) \right. \nonumber\\
&\hspace*{15mm}\left. \rule{0pt}{7mm}p_{\hat{S}|Y}(\hat{s}|y)d(s,\hat{s})\right) (p_{J|X}(j|x) - \tilde{p}_{J|X}(j|x))
\end{align*}
Now, let us define $h(x,j) = (\rho_{J|X}(j|x) - c_2)/c_1$. It is clear
from the expression of $\rho_{J|X}$ in~\eqref{eq:bestc} that
\begin{align*}
 \Delta_{p_{J|X}} - \Delta_{\tilde{p}_{J|X}}
&\geq \sum_{x,j} h(x,j) (p_{J|X}(x|j) - \tilde{p}_{J|X}(x|j))\nonumber \\
& = \frac{1}{c_1} \sum_{x,j} \rho_{J|X}(j|x) (p_{J|X}(x|j) - \tilde{p}_{J|X}(x|j))\nonumber
\end{align*}
The first term is ${P}_J$ by definition, and the second term is
the expected cost incurred by the i.i.d. jamming strategy $\tilde{p}_{J|X}$.
Since $\tilde{p}_{J|X}$ is also a feasible i.i.d. jamming strategy
with expected cost at most ${P}_J$, we have
\begin{equation*}
 \Delta_{p_{J|X}}-\Delta_{\tilde{p}_{J|X}}  \geq 0. ~\QEDopen
\end{equation*}
For a given user strategy, as assumed in Lemma~\ref{lem:jammer:cost:function}, there may be
multiple feasible i.i.d. jamming strategies which are optimal. In fact,
it can be noted from the above proof that if the definition of
$\rho_{J|X}$ in~\eqref{eq:bestc} holds with equality for all $(x,j)$
with $p_{J|X}(j|x)=0$, then any i.i.d. jamming strategy $\tilde{p}_{J|X}$
which utilizes the full cost ${P}_J$ is optimal. However, for such
$\tilde{p}_{J|X}$, the user strategy $(p_{X|S},p_{\hat{S}|Y})$ may not be
optimal, and thus, this pair of strategies may not form a Nash equilibrium.
\section{Some Important Illustrations of Theorem~\ref{thm:main:result}}\label{sec:examples}
We now provide some example JSCCSJ setups that  are 
\textit{matched source-jammer-channel} systems. We first consider a Gaussian JSCCSJ where a Gaussian source is communicated over an AWGN channel under
the squared error distortion measure, and quadratic cost function 
for the encoder and the jammer.  Next, we consider a Binary JSCCSJ where a binary 
symmetric source is sent over a binary symmetric channel under the
Hamming distortion measure and constant cost function for the encoder and fixed cost function for the
jammer.
\subsection{The Gaussian JSCCSJ}
In the Gaussian JSCCSJ, the source is i.i.d. Gaussian with $p_S\sim \mathcal{N}(0,1)$ (can be easily generalized to arbitrary variance). The channel is an AWGN channel with $p_{Y|X,J}\sim\mathcal{N}(X+J,\sigma^2)$.  
In addition, the source distortion measure $d(s,\hat{s})=(s-\hat{s}^2)$,
the encoder cost function $\rho_X(x)=x^2$, and the jammer cost function 
$\rho_{J|X}(j|x)=j^2$.  Let us consider an uncoded communication strategy
with deterministic encoder $f(S)=\sqrt{P_U}S$ and decoder 
$g(Y)=P_U/(P_U+\sigma^2)Y$. 
They correspond to 
$p_{X|S}(x|s) = I_{\{x=\sqrt{P_U}s\}}$ and 
$p_{\hat{S}|y}(\hat{s}|y)=I_{\{\hat{s}=y P_U/(P_U+\sigma^2)\}}$,
where $I$ denotes the indicator function. Let us also consider the 
i.i.d. jamming strategy $J_i = \alpha X_i +R_i$, $\alpha$ is a constant and $R_i$, $i=1,2,\dots,n$ are i.i.d. with each $R_i\sim\mathcal{N}(0,\sigma_R^2)$.
So, conditioned on $X$, the jammer signal $J$ has the distribution
$\mathcal{N}(\alpha X,\sigma_R^2)$. Here $\alpha$ and $\sigma_R^2$ 
depend on $\sigma^2$, $P_U$ and $P_J$
as given in~\cite[Theorem 1]{basar}. It can be checked that for these
user and jammer strategies, $d, \rho_X,$ and $\rho_{J|X}$ satisfy
\eqref{eq:best:setup}. Hence we have
\begin{lemma}\label{lem:gaussian:setup}
The Gaussian JSCCSJ is a matched system, and the uncoded communication with encoder $f(S)=\sqrt{P_U}S$ and decoder $g(Y)=P_U/(P_U+\sigma^2)Y$ and i.i.d. linear Gaussian jamming
form a Nash equilibrium.
\end{lemma}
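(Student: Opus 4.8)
The plan is to obtain Lemma~\ref{lem:gaussian:setup} as an immediate corollary of Theorem~\ref{thm:main:result}. By that theorem it is enough to produce constants $a_1,a_2,b_1,c_1,c_2$ with $a_1,b_1,c_1>0$ and a function $d_0(\cdot)$ for which the Gaussian data (squared-error distortion and quadratic user and jammer costs) match the three structural forms \eqref{eq:besta}--\eqref{eq:bestc} under the proposed uncoded pair $(p_{X|S},p_{\hat S|Y})$ and the linear i.i.d.\ jammer $p_{J|X}=\mathcal{N}(\alpha x,\sigma_R^2)$. The whole proof thus collapses to three Gaussian divergence/moment calculations, subject to the constraint that they hold \emph{simultaneously} for one triple $(p_{X|S},p_{J|X},p_{\hat S|Y})$.

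First I would collapse the jammed link to an effective scalar Gaussian channel: marginalising the additive jammer gives $Y\mid X=x\sim\mathcal{N}\big((1+\alpha)x,\ \sigma^2+\sigma_R^2\big)$ and hence $Y\sim\mathcal{N}\big(0,\ (1+\alpha)^2P_U+\sigma^2+\sigma_R^2\big)$. For \eqref{eq:besta} I would then evaluate the Gaussian--Gaussian divergence $D\big(p_{Y|X}(\cdot|x)\,\|\,p_Y\big)$; its only $x$-dependent contribution is the mean-offset term, which is a multiple of $x^2$, so $\rho_X(x)=x^2$ is an affine function of this divergence and \eqref{eq:besta} holds with a positive $a_1$. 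This step is routine.

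For \eqref{eq:bestb} I would use that $(S,\hat S)$ are jointly Gaussian, so $p_{S|\hat S}(s|\hat s)$ is Gaussian with mean linear in $\hat s$ and a constant variance $v$; consequently $-\log p_{S|\hat S}(s|\hat s)$ is a quadratic $\tfrac{1}{2v}(s-\gamma\hat s)^2+\text{const}$, with $\gamma\hat s=\mathbb{E}[S\mid\hat S=\hat s]$. Matching its $\hat s^2$ and $s\hat s$ terms to those of $(s-\hat s)^2$ forces $\gamma=1$ and $b_1=2v$, while the leftover $s^2$ term is swept into $d_0(s)$. The requirement $\gamma=1$ is exactly the orthogonality $\mathbb{E}[(S-\hat S)\hat S]=0$, i.e.\ that $\hat S=\mathbb{E}[S\mid Y]$ is the MMSE estimate for the effective channel; so this step amounts to checking that the decoder gain is the MMSE gain, which ties the decoder to $(\alpha,\sigma_R^2)$.

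The last identity, \eqref{eq:bestc}, is the one I expect to be the obstacle. Here I would read the inner sum as the conditional expected distortion $\mathbb{E}[d(S,\hat S)\mid X=x,J=j]$. Because the encoder is deterministic and invertible, $S$ is determined by $X=x$, and a short computation yields a quadratic in $j$ of the form $\big((k-c)x-cj\big)^2+c^2\sigma^2$, where $k$ and $c$ are the encoder and decoder gains. Matching this to $\rho_{J|X}(j|x)=j^2$ requires the gains to satisfy $k=c$; once they do, both the $xj$ cross term and the $x^2$ term disappear, leaving $c^2j^2+c^2\sigma^2$, so \eqref{eq:bestc} holds with $c_1=1/c^2>0$ and $c_2=-\sigma^2$. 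The real difficulty is that $k=c$ must hold \emph{together} with the MMSE-decoder condition drawn from \eqref{eq:bestb}, so that one pair $(\alpha,\sigma_R^2)$ respecting the jammer power budget $\alpha^2P_U+\sigma_R^2=P_J$ satisfies both. This simultaneous fixed point is exactly Ba\c{s}ar's correlated-jamming equilibrium, so I would close the argument by substituting the values of $\alpha$ and $\sigma_R^2$ from \cite[Theorem~1]{basar}, confirming the two gain identities, and then invoking Theorem~\ref{thm:main:result} to conclude that the proposed strategies form a Nash equilibrium and that the Gaussian JSCCSJ is matched.
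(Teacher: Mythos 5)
Your proposal takes essentially the same route as the paper: the paper proves Lemma~\ref{lem:gaussian:setup} by simply asserting that ``it can be checked'' that $d$, $\rho_X$ and $\rho_{J|X}$ satisfy \eqref{eq:best:setup} for the stated uncoded pair and the linear Gaussian jammer with $(\alpha,\sigma_R^2)$ taken from Ba\c{s}ar's Theorem~1, and then invokes Theorem~\ref{thm:main:result}. You carry out exactly that verification, only in more detail than the paper does, and you correctly identify that the substance of the check is the simultaneous consistency of the divergence, MMSE, and gain conditions, which is resolved by substituting Ba\c{s}ar's equilibrium parameters.
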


Note that~\cite[Theorem 1]{basar} is a special case of the Gaussian JSCCSJ considered in Lemma~\ref{lem:gaussian:setup} for $n=1$. 
\subsection{The Binary JSCCSJ}
In the binary JSCCSJ, an i.i.d binary symmetric source with 
$p_S\sim Bern(1/2)$, is to be transmitted over a binary symmetric 
channel $Y=X\oplus J\oplus Z$, where $Z\sim Bern(p)$ is an independent noise with
$p\leq 1/2$ (without loss of generality; otherwise things change slightly
as pointed out later), $J \in \{0,1\}$, and $\oplus$ denotes modulo-$2$ addition.
In addition, the source has Hamming distortion measure 
$d(s,\hat{s})=d_{H}(s,\hat{s}) =I_{s\neq\hat{s}}$,
the encoder has constant cost function $\rho_X(x)=k$, and the binary jammer has 
cost function $\rho_{J|X}(j|x)=j$.  Let the jammer have a cost constraint
$P_J\leq 1/2$. This restricts the jammer
to input at most $P_J$ fraction of $1$s. Note that $p*P_J :=
p(1-P_J)+P_J(1-p) \leq 1/2$
so that the jammer and the independent noise together can not flip more
than half of the transmitted bits.
Let us consider an uncoded communication strategy
with deterministic encoder $f(S)=S$ and decoder 
$g(Y)=Y$. They correspond to 
$p_{X|S}(x|s) = I_{\{x=s\}}$ and 
$p_{\hat{S}|y}(\hat{s}|Y)=I_{\{\hat{s}=y\}}$.
Let us also consider the 
i.i.d. jamming strategy (independent of encoder output on the channel) with
$J_i \sim Bern(P_J)$. 

It can be checked that for these
user and jammer strategies, $d, \rho_X,$ and $\rho_{J|X}$ satisfy~\eqref{eq:best:setup}. Hence we have the following lemma.
\begin{lemma}\label{lem:binary:setup}
The Binary JSCCSJ is a matched system, and the uncoded communication
with encoder $f(S)=S$ and decoder 
$g(Y)=Y$ and i.i.d. Bernoulli jamming independent of state
form a Nash equilibrium.
\end{lemma}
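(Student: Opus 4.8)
The plan is to verify directly that the binary setup satisfies the three structural conditions~\eqref{eq:best:setup} of Theorem~\ref{thm:main:result}, so that the claimed uncoded strategy and i.i.d.\ Bernoulli jamming form a Nash equilibrium by that theorem. The first step is to reduce the channel: since the jammer input $J\sim\mathrm{Bern}(P_J)$ is i.i.d.\ and independent of $X$, and $Z\sim\mathrm{Bern}(p)$ is independent noise, the effective channel $p_{Y|X}$ is a binary symmetric channel with crossover probability $p*P_J$. Because $f(S)=S$ and $p_S\sim\mathrm{Bern}(1/2)$, I get $p_X(x)=1/2$ and, by symmetry, $p_Y(y)=1/2$; a Bayes computation then gives $p_{S|\hat S}(s|\hat s)=1-(p*P_J)$ when $s=\hat s$ and $p*P_J$ when $s\neq\hat s$.

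Next I verify~\eqref{eq:besta}. By the symmetry just noted, $D\big(p_{Y|X}(\cdot|x)\,\|\,p_Y(\cdot)\big)$ takes the same value for $x=0$ and $x=1$, and $\rho_X(x)=k$ is also constant; so any $a_1>0$ together with $a_2=k-a_1 D\big(p_{Y|X}(\cdot|0)\,\|\,p_Y\big)$ makes~\eqref{eq:besta} hold with equality on the support. For~\eqref{eq:bestb}, I substitute $d(s,\hat s)\in\{0,1\}$ and the two values of $p_{S|\hat S}$ into $d(s,\hat s)=-b_1\log p_{S|\hat S}(s|\hat s)+d_0(s)$: the case $s=\hat s$ fixes $d_0(s)=b_1\log(1-(p*P_J))$, and the case $s\neq\hat s$ forces $b_1\log\frac{1-(p*P_J)}{p*P_J}=1$.

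For~\eqref{eq:bestc} I compute the inner expectation for the uncoded pair. Since $p_{X|S}(x|s)=I_{x=s}$ and $p_{\hat S|Y}(\hat s|y)=I_{\hat s=y}$, the sum collapses to $\tfrac12\Pr[Y\neq x\mid X=x,J=j]$, which equals $p/2$ for $j=0$ and $(1-p)/2$ for $j=1$, independently of $x$. Matching this to $\rho_{J|X}(j|x)=j$ yields the two equations $0=c_1(p/2)+c_2$ and $1=c_1\big((1-p)/2\big)+c_2$, whose solution is $c_1=2/(1-2p)$ and $c_2=-p/(1-2p)$, holding with equality for every $(x,j)$ in the support. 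I would then invoke Theorem~\ref{thm:main:result} to conclude.

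The step that needs the stated hypotheses is checking the signs of the constants. The positivity $b_1>0$ requires $\frac{1-(p*P_J)}{p*P_J}>1$, i.e.\ $p*P_J<1/2$, which is exactly the condition ensuring the combined jammer-plus-noise flip probability stays below one half; likewise $c_1>0$ requires $p<1/2$, i.e.\ a genuinely informative channel. These sign checks are the substantive content behind the ``$p\le 1/2$'' and ``$p*P_J\le 1/2$'' hypotheses (with the boundary cases where a logarithm vanishes handled separately, as the paper flags). I would present the remaining arithmetic verifications of~\eqref{eq:best:setup} as routine once the effective BSC and $p_{S|\hat S}$ are in hand.
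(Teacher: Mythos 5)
Your proof is correct, and the computations check out: the effective channel is a BSC with crossover $p*P_J$, the backward channel $p_{S|\hat S}$ is as you state, and the constants $a_2=k-a_1\bigl(1-H(p*P_J)\bigr)$, $b_1=1/\log\tfrac{1-(p*P_J)}{p*P_J}$, $c_1=2/(1-2p)$, $c_2=-p/(1-2p)$ do make \eqref{eq:best:setup} hold with equality on the relevant supports, with positivity of $b_1$ and $c_1$ exactly where you locate it. What you have done, however, is fill in the details of the route the paper only asserts in one sentence (``it can be checked that \ldots{} $d,\rho_X,\rho_{J|X}$ satisfy \eqref{eq:best:setup}''); the proof the paper actually writes out is a different, more elementary one that bypasses Theorem~\ref{thm:main:result} entirely and verifies the two inequalities of \eqref{eq:optimality} directly. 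For the right-hand inequality the paper fixes the jammer at $\mathrm{Bern}(P_J)$, observes the channel collapses to a $\mathrm{BSC}(\hat p)$ with $\hat p\le 1/2$, and cites the classical matched source--channel fact that $f(S)=S$, $g(Y)=Y$ is optimal there; for the left-hand inequality it fixes the uncoded user strategy, invokes Theorem~\ref{thm:jammer:iid} to restrict attention to i.i.d.\ jammers, and then maximizes $\Pr\{J\oplus Z=1\}$ subject to $\mathbb{E}[J]\le P_J$ by inspection. The paper's route buys transparency and avoids any divergence or log-likelihood computations; your route buys uniformity (it exhibits the binary example as a literal instance of the general matching conditions, which is the conceptual point of the section) and makes explicit where the hypotheses $p\le 1/2$ and $p*P_J\le 1/2$ enter, namely as the sign conditions $c_1>0$ and $b_1>0$. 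Your handling of the degenerate boundary $p*P_J=1/2$ (where no valid $b_1$ exists and the distortion is $1/2$ regardless of strategy) is the one place where a separate short argument is genuinely needed rather than routine, so do spell that case out if you write this up in full.
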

Since the binary JSCCSJ satisfies the conditions in Theorem~\ref{thm:main:result}
for uncoded user communication and i.i.d. jamming
independent of $X$ as mentioned above, it proves the lemma.
This shows that the binary JSCCSJ is implicitly connected
to the above user and jammer strategies.

In the following, we present a more specific elementary proof
of Lemma~\ref{lem:binary:setup}. We proceed by validating~\eqref{eq:optimality} for this pair of strategies. 
We first establish the RHS and then, proceed to establish the LHS of~\eqref{eq:optimality}.

{\em Proof of Lemma~\ref{lem:binary:setup}:}\\
{\em The RHS of inequality~\eqref{eq:optimality}:}
To prove the RHS of~\eqref{eq:optimality}, let us fix the jammer strategy to be i.i.d.  $Bern(P_J)$, 
We now have an equivalent BSC with a transition 
probability $\hat{p}=p(1-P_J)+(1-p) P_J$. 
Thus, we now have the problem of communicating a binary symmetric source $S$ over a BSC($\hat{p}$) with $\hat{p} \leq 1/2$.
From~\cite{gastpar},~\cite{gallager}, we know that if $\rho_X(x)$ is a constant
function, and the source distortion measure is the Hamming distortion measure, the single-letter coding scheme, where $f(S)=S$ and $g(Y)=Y$, is an optimal communication scheme which minimizes the average distortion. This validates the RHS of~\eqref{eq:optimality} for the binary JSCCSJ.

{\em The LHS of inequality~\eqref{eq:optimality}:}
Here, we assume that the user strategy is the single-letter coding strategy 
where $f(S)=S$ and $g(Y)=Y$.  From Theorem~\ref{thm:jammer:iid}, we 
know that for this user strategy of uncoded communication, the best 
jamming strategy is an i.i.d. jamming strategy. 
The best jamming strategy is the solution of the 
optimization problem in~\eqref{eq:jammer:distortion:cost:function} under
the cost constraint $P_J$. We 
now show that for the given jammer cost function, the best jamming 
strategy is a Bern($P_J$) jamming strategy independent of $X$. 
Since $\rho_{J|X}(j|x) = j$, the best jamming strategy is the solution of the following problem. 
\begin{align*}\label{eq:jammer:opt:problem:2}
& \arg \max_{p_{J|X}:\mathbb{E}[J]\leq P_J} \mathbb{E}[d_H(S,\hat{S})]\\
= & \arg \max_{p_{J|X}:\mathbb{E}[J]\leq P_J} \mathbb{E}[d_H(S,S\oplus J\oplus Z)]\\
= & \arg \max_{p_{J}:\mathbb{E}[J]\leq P_J} \mathbb{E}[J\oplus Z]\\
= & \arg \max_{p_{J}:\mathbb{E}[J]\leq P_J} Pr\{J\oplus Z=1\}
\end{align*}
This follows by noting that $\hat{S}=S\oplus J \oplus Z$ for the
given encoder-decoder pair,
where $Z\sim Bern(p)$. This shows that $J$ can be chosen independent of
$X$.
Clearly, under the given jammer cost constraint, the above maximum
is achieved by $J \sim Bern(P_J)$.

This completes the proof.
\section{Discussion}
We see that the matching in a JSCCSJ depends on the following: the source (with distortion measure), the channel (with user and jammer cost functions), the single-letter user coding scheme along with the i.i.d. jammer strategy. In~\cite{gastpar}, the results in Lemmas 3 and 4 of that paper (not reproduced here) were instrumental in determining the conditions on the encoder cost function and distortion measure for matching in JSCC systems with no jammer. These ``inverse'' lemmas show that there is a choice of the cost function and cost constraint (distortion measure and distortion constraint, resp.) under which a given input distribution (test channel distribution, resp.) is optimal for the channel (source, resp.) coding problem. In establishing the conditions for matching in JSCCSJ in our work, we see that, along with \cite[Lemmas 3 and 4]{gastpar}, our Lemma~\ref{lem:jammer:cost:function} also plays an important role. Lemma~\ref{lem:jammer:cost:function} is also an ``inverse'' lemma in that it establishes that there is a choice for the jammer cost function ($\rho_{J|X}$) and cost constraint ($P_J$) such that a given i.i.d. jamming strategy is optimal against a (given) uncoded user strategy. 

In our example, the equilibrium strategies for the user and the jammer are linear strategies. However, this does not rule out the existence of other pairs of equilibrium strategies, possibly non-linear. This, and other related questions, are investigated in~\cite{ekyol-cdc2013},~\cite{ekyol-isit2013}. 

Finally, the binary JSCCSJ extends straight forwardly to the $L-$ary uniform system to give more examples of discrete alphabet matched JSCCSJs.
\section{Conclusion}
In this work, we studied the effects of correlated jamming on the performance of a JSCC system. The user-jammer interaction was modeled as a zero sum game over the average distortion and the optimal performance of the system was characterized through the Nash equilibrium utility of the game. A set of conditions on the source and the channel were determined for the existence of an equilibrium, where the equilibrium user strategy was uncoded communication and the equilibrium jammer strategy was i.i.d jamming. Thus, a probabilistic matching of the source, jammer and channel was shown to exist and such systems were called matched source-jammer-channel systems. It was shown that a well-known example of a Gaussian JSCC with jamming was a special case of our problem. In addition, another example of  Binary JSCC system with a jammer, was analysed and its solution provided. 
\section*{Acknowledgment}
The authors thank Sibiraj B. Pillai for fruitful discussions.
The work  was supported in part by the
Bharti Centre for Communication, IIT Bombay, a grant from the Information Technology Research Academy, Media Lab Asia, to IIT Bombay and TIFR, a grant from the Department
of Science and Technology, Government of India, to IIT Bombay, and the Ramanujan Fellowship from the Department of Science and Technology, Government of India, to V. M. Prabhakaran.
\addcontentsline{toc}{section}{Acknowledgment} 
\bibliographystyle{IEEEtran}
\bibliography{IEEEabrv,References}
\end{document}